\newtheorem{theorem}{Theorem}
\newtheorem{lemma}{Lemma}
\newtheorem{corollary}{Corollary}
\newtheorem{proposition}{Proposition}
\newtheorem{remark}{Remark}
\newcommand{\tE}{{\tt E}}
\begin{document}
%
\title{Outage Performance for Cooperative NOMA Transmission with an AF Relay}
%
%
 \author{Xuesong~Liang,
         Yongpeng~Wu,~\IEEEmembership{Senior Member,~IEEE,}
        Derrick~Wing~Kwan~Ng,~\IEEEmembership{Member,~IEEE,}
        Yiping~Zuo,
        Shi~Jin,~\IEEEmembership{Member,~IEEE}
        and~Hongbo~Zhu
\thanks{Manuscript received December 28, 2016; revised February 19, 2017; accepted March 5, 2017. The work of X.~Liang was supported by the Jiangsu Provincial Colleges and Universities Project (Grant No. 16KJB510026) and was sponsored by NUPTSF (Grant No. NY213062, NY214035). The work of Y.~Wu was supported by the TUM University Foundation Fellowship. The work of D. W. K. Ng was supported under Australian Research Council¡¯s Discovery Early Career Researcher  Award funding scheme (Project NO. DE170100137). The work of S.~Jin was supported in part by the National Science Foundation (NSFC) for Distinguished Young Scholars  of China with Grant 61625106.
}
\thanks{Xuesong Liang, Yiping~Zuo and Hongbo~Zhu are with the School of Telecommunication and Information Engineering, Nanjing University of Posts and Telecommunications, Nanjing, China (e-mail: liangxs@njupt.edu.cn, zuoyiping93@gmail.com and zhuhb@njupt.edu.cn).

Yongpeng~Wu is with Institute for Communications Engineering,  Technical University of Munich,
, Germany (email: yongpeng.wu2016@gmail.com).

Derrick Wing Kwan Ng is with the School of Electrical Engineering and Telecommunications,
the University of New South Wales, Australia (email: w.k.ng@unsw.edu.au).

Shi~Jin is with the National Mobile Communications Research
Laboratory, Southeast University, Nanjing, China (e-mail: jinshi@seu.edu.cn).
}
}

%
%

\markboth{IEEE Communications Letters,~Vol.~?, No.~?, ?~2017}
{Shell \MakeLowercase{\textit{et al.}}: Bare Demo of IEEEtran.cls for IEEE Communications Society Journals}

%



\maketitle

\begin{abstract}
This letter studies the outage performance of cooperative non-orthogonal multiple access  (NOMA)
network with the help of an amplify-and-forward  relay.
An accurate closed-form approximation for the exact outage probability is derived.
Based on this, the asymptotic outage probability  is investigated,
which shows that cooperative NOMA achieves the same diversity order and the superior coding gain compared to cooperative orthogonal multiple access.
It is also revealed that when the transmit power of  relay is smaller than that of the base station,  the outage performance improves as the distance between the relay and indirect link user decreases.
\end{abstract}

\begin{IEEEkeywords}
Outage probability, non-orthogonal multiple access, amplify-and-forward, relaying networks.
\end{IEEEkeywords}

%
\IEEEpeerreviewmaketitle

\section{Introduction}
%
%
%
%
\IEEEPARstart{I}n recent years, non-orthogonal multiple access (NOMA), which is a potential candidate for the multiple access schemes in the fifth-generation
(5G), has received significant attention \cite{QCLi_5G,YSaito_Nonorthogonal}.
Different from the conventional orthogonal multiple access (OMA) schemes, NOMA can serve multiple users with utilizing the same time and frequency resource to obtain a significant spectral efficiency gain \cite{ABenjebbour_Concept}.
As one of the key radio access technologies to achieve higher spectral efficiency with low cost in 5G, NOMA is extended to cooperative transmission to enhance the transmission reliability for the users with poor channel conditions \cite{ZDing_Cooperative}.
In particular, a cooperative NOMA scheme was proposed in \cite{ZDing_Cooperative} by selecting the users with better channel conditions as relays for assisting the transmission of other users.
In addition, a cooperative NOMA scheme with dedicated relays was investigated in \cite{ZDing_Relay}
where a two-stage relay selection strategy for NOMA was proposed.
The concept of cooperative NOMA with  dedicated relays was also extended to systems with multiple users equipped with multiple antennas \cite{JMen_Nonorthogonal}.
Specifically, relay selection based on the maximal instantaneous signal-to-noise-ratio (SNR) was proposed and the corresponding system outage performance was analyzed.
The aforementioned works on cooperative NOMA schemes with dedicated relays usually assume that there are no direct links between the base station (BS) and the users, and all the users must exchange information with the BS through the relays. However, for typical scenarios of small cells in 5G networks \cite{YNiu_Exploiting},
some users can  directly communicate with the BS while some cannot.
In fact the research on cooperative NOMA schemes taking into both direct and indirect links users is also addressed. In particular, a recent work addressed in \cite{JBKim_Nonorthogonal} shows that the spectral efficiency is remarkably improved when coordinated direct and decode-and-forward relaying was employed in NOMA scheme,
and  a  cooperative NOMA system with  a dedicated full-duplex relay was proposed in \cite{CZong_Nonorthogonal_FD}, which provided the exact analytical expressions for achievable outage probability  and ergodic sum capacity of the system.
However, to the best of the authors' knowledge, the results for amplify-and-forward (AF) relaying NOMA systems has not yet been reported, which motivates the study of this letter.

In this letter, we investigate a downlink cooperative NOMA scheme including direct and indirect link users, where a dedicated relay node with AF protocol is adopted.
Our contributions include two parts: 1) We compare the overall outage probability of the cooperative NOMA with the conventional cooperative OMA, which indicates that the cooperative NOMA outperforms cooperative OMA significantly. Moreover, we derive a closed-form approximation of the  outage probability, which is shown to closely match with the exact results;  2) We analyze the outage performance in the high SNR regime, where both the diversity order and coding gain of the system are derived.
We show that the diversity order of the system is one and the coding gain is affected by the location of the relay.
Besides, if the transmit power of the relay is much smaller than the transmit power of the BS, the outage performance is improved when the relay is nearer to indirect link user.
Monte Carlo simulations validate our  analytical results.

\emph{Notations}---
In this letter,  we denote the probability and the expectation  value of a random event $A$ by ${\tt P}\{A\}$ and ${\tt E}\{A\}$, respectively.
$| \cdot |$ denotes the Euclidean norm of a scalar.
%
\section{System Model}\label{System_Model}
Consider a fundamental model of a downlink cooperative NOMA system 
including one BS, two users (UE1 and UE2), and one  relay node (R), in which UE1 directly communicates with  the BS.
 Besides, UE2 needs the help from R because there is no direct path between the  BS and UE2 due to the long distance or significant blockage between them. Each node is equipped with a single-antenna and the relay operates in half-duplex mode using an AF protocol. The scheme of cooperative NOMA consists of two consecutive equal length time slots, as described in the follows.

During the first time slot, the BS broadcasts the superimposed signal, $x_1  + x_2$,
to R and UE1,
 where $x_1$ and $x_2$ are the corresponding signals for UE1 and UE2, respectively,  with ${\tt E}\{\left| {x_1 } \right|^2\}=P_1$ and ${\tt E}\{\left| {x_2 } \right|^2\}=P_2$. According to the NOMA protocol described in \cite{ABenjebbour_Concept}, we set $P_1 < P_2$ and denote the total transmit power for the BS as $ P_T = P_1  + P_2 $.\\
 Therefore the received signals $y'_1 $ at UE1, and $y_r$ at R are given by
 \begin{small}
\begin{align}
&y'_1  = h_1 x + n_1,\label{Ph1:y1}\\
&y_r  = h_r x + n_r,\label{Ph1:yr}
\end{align}\end{small}respectively,
where $n_1 $ and $n_r $ denote the complex additive white Gaussian noises (AWGN), both with zero mean and variance $N_0$ at UE1 and R, respectively.
 Meanwhile, we assume that the channels between the BS and  UE1, $h_1 $, and that between the BS and R, $h_r $, are independent
Rayleigh fading  with ${\tt E}\{\left| {h_1 } \right|^2\}=\sigma^2_1$ and ${\tt E}\{\left| {h_r  } \right|^2\}=\sigma^2_r$.

During the second time slot, the BS remains silent and R broadcasts the signal $x_r$ to UE1 and UE2 after multiplying  the previous received signal, $y_r$  with an amplifying gain $\rho   = \sqrt {\frac{{P_R }}{{P_T \left| {h_r } \right|^2  + N_0 }}}$  \cite{CSPatel_Statistical}, where $P_R  = \tE\left\{ {\left| {x_r } \right|^2 } \right\}$
denotes the transmit power of the relay. Therefore,
the received signals $y''_1$ at UE1, and $y_2$ at UE2 are given by
\begin{small}\begin{align}
&y''_1  
= \tilde h_1 x_r + \tilde n_1,\label{Ph2:y1}\\
&y_2  = h_{r,2} x_r  + n_2,\label{Ph2:y2}
\end{align}\end{small}with
$x_r  = \rho y_r  $, $\tilde h_1  = \rho h_{r,1} h_r$ and $\tilde n_1  = \rho h_{r,1} n_r  + n'_1$.
We  denote the AWGNs both with zero mean and  variance $N_0$ at UE1 (in relaying phase) and UE2 by $n'_1  $ and $n_2  $, respectively.
We also assume that the channels between R and UE1, $h_{r,1} $, and that between R and UE2, $ h_{r,2}  $, are independent
Rayleigh fading  with ${\tt E}\{\left| {h_{r,1} } \right|^2\}= \sigma _{r,1}^2 $ and ${\tt E}\{\left| {h_{r,2} } \right|^2\}= \sigma _{r,2}^2$.\\
{{Then, by using the maximum ratio combining criterion, UE1 combines the received signals, $y'_1$ and $y''_1$,  with the conjugate of $h_1$ and $\tilde h_1$, respectively, 
which yields}}
\begin{small}\begin{equation}\label{MRC:y1}
y_1  = h_1^ *  y'_1  + \tilde h_1^ *  y''_1.
\end{equation}\end{small}According to
the NOMA scheme in \cite{ABenjebbour_Concept}, {{UE1 firstly decodes the data of $x_2$. After $x_2$ is decoded successfully, UE1 removes the signal of $x_2$ and then decodes the data of $x_1$ based on successive interference cancellation (SIC).}}
 Therefore the signal-to-interference-plus-noise-ratios (SINRs) for decoding $x_2$ and $x_1$
by UE1 are respectively given by
\begin{small}
\begin{align}
\gamma _{12}  &= \frac{{\left( {\left| {h_1 } \right|^2  + \left| {\tilde h_1 } \right|^2 } \right)^2 P_2 }}{{\left( {\left| {h_1 } \right|^2  + \left| {\tilde h_1 } \right|^2 } \right)^2 P_1  + \left[ {\left| {\tilde h_1 } \right|^2 \left( {\rho ^2 \left| {h_{r,1} } \right|^2  + 1} \right) + \left| {h_1 } \right|^2 } \right]N_0 }},\label{SINR:y1r}\\
\gamma _1 & = \frac{{\left( {\left| {h_1 } \right|^2  + \left| {\tilde h_1 } \right|^2 } \right)^2 P_1 }}{{\left| {h_1 } \right|^2 N_0  + \left| {\tilde h_1 } \right|^2 \left( {\rho ^2 \left| {h_{r,1} } \right|^2  + 1} \right)N_0 }},\label{SINR:y1}
\end{align}
\end{small}and
the SINR for decoding $x_2$ by UE2 is given by
\begin{small}\begin{equation}\label{SINR:y2}
\gamma _2  = \frac{{\left| {h_{r,2} } \right|^2 \left| {h_r } \right|^2 P_2 }}{{\left| {h_{r,2} } \right|^2 \left| {h_r } \right|^2 P_1  + \left( {\left| {h_{r,2} } \right|^2  + \rho ^{ - 2} } \right)N_0 }}.
\end{equation}\end{small}

\section{Outage Performance Analysis}\label{Performance_Analysis}
In this section, we investigate the outage probability, which is an important metric of the considered cooperative NOMA system. However, the analytical expression of the exact outage probability for this system is mathematically intractable. Alternatively, an approximation with a closed-form expression for the outage probability is derived, and based on which the asymptotic characteristics of outage performance in the high SNR regime are also discussed.

\subsection{Outage Probability}\label{Outage_Probability}
To begin with, we characterize the outage probability achieved by this two-phase cooperative NOMA system. Denoting the data rate requirements for UE1 and UE2 as $R_1$ and $R_2$, respectively, we note that the overall outage probability of system is defined as
\begin{small}
\begin{equation}\label{Def:P_out}
{\tt P}_{\rm out}  \buildrel \Delta \over = {\tt P}\left\{ {\gamma _{12}  < f\left( {R_2 } \right){\mbox{ or }}\gamma _1  < f\left( {R_1 } \right){\mbox{ or }}\gamma _2  < f\left( {R_2 } \right)} \right\}
\end{equation}
\end{small}
{{where $f\left( R \right) $ denotes the SINR threshold relative to the practical data rate requirement with $f\left( R \right) = 2^{2R}  - 1$.
In the definition of (\ref{Def:P_out}), we note that $\gamma _1  $, $\gamma _{12} $ and $\gamma _2  $ denote  the SINRs of  $x_1$ and $x_2$ at UE1, and SINR of $x_2$ at UE2, respectively, while $ f\left( {R_1 } \right)$ and $ f\left( {R_2 } \right)$ represent the SINR thresholds for successfully decoding $x_1$ and $x_2$,  respectively.}}
Unfortunately, mathematical analysis of the outage probability in (\ref{Def:P_out}) becomes intractable since the considered random events are correlated.
Hence, an alternative with a closed-form approximation to (\ref{Def:P_out}) is needed.

\begin{proposition}\label{Thm:Approx_PoutL}
An approximation for the outage probability of the system is given by (\ref{Eqn:PoutA}) (on top of next page)
\begin{figure*}
\begin{small}\begin{align}\label{Eqn:PoutA}
 {\tt P}_{\rm out}^{\rm A}   = \left\{ {\begin{aligned}
 & {{1 - \mu \exp \left( { - \delta } \right)K_1 \left( \mu  \right)\left( {1 + \frac{{\bar \gamma _1 }}{{\sigma _1^2 }}} \right)\exp \left( { - \frac{{\bar \gamma _1 }}{{\sigma _1^2 }}} \right)},}
  {\mbox{ when } \sigma _1^2  = \lambda \sigma _{r,1}^2 {{{\mbox{ and }}P_2 } \mathord{\left/
 {\vphantom {{{}P_2 } {P_1 }}} \right.
 \kern-\nulldelimiterspace} {P_1 }} > f\left( {R_2 } \right)}  \\
&1 - \mu \exp \left( { - \delta } \right)K_1 \left( \mu  \right)\left[ {\eta \exp \left( { - \frac{{\bar \gamma _1 }}{{\sigma _1^2 }}} \right) + \left( {1 - \eta } \right)\exp \left( { - \frac{{\bar \gamma _1 }}{{\lambda \sigma _{r,1}^2 }}} \right)} \right],
   \mbox{ when } \sigma _1^2  \ne \lambda \sigma _{r,1}^2 {{{\mbox{ and }}P_2 } \mathord{\left/
 {\vphantom {{{}P_2 } {P_1 }}} \right.
 \kern-\nulldelimiterspace} {P_1 }} > f\left( {R_2 } \right) \\
& 1,{\mbox{ else}}{{.}} \\
\end{aligned}} \right.
\end{align}\end{small}
\hrulefill
\end{figure*}
with $\eta  = \frac{{\sigma _1^2 }}{{\sigma _1^2  - \lambda \sigma _{r,1}^2 }}$,
$\mu  =\frac{{2\theta _r }}{{\sqrt {\lambda \sigma _r^2 \sigma _{r,2}^2 } }}$, $\delta  = \frac{{\theta _r }}{{\sigma _r^2 }} + \frac{{\theta _r }}{{\lambda \sigma _{r,2}^2 }}$, and $\lambda  = \frac{{P_R }}{{P_T }}$.
\end{proposition}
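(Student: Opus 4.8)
The plan is to pass to the complementary event and evaluate the joint success probability
\begin{equation*}
\tP_{\rm out} = 1 - \tP\left\{\gamma_{12} \ge f(R_2),\ \gamma_1 \ge f(R_1),\ \gamma_2 \ge f(R_2)\right\}.
\end{equation*}
First I would settle the feasibility of the first condition. Because the signal and interference terms of $\gamma_{12}$ share the common factor $(|h_1|^2 + |\tilde h_1|^2)^2$, letting that factor grow shows $\gamma_{12}$ saturates at $P_2/P_1$; hence $\gamma_{12} \ge f(R_2)$ is achievable for some channel realization only when $P_2/P_1 > f(R_2)$. Otherwise no realization escapes outage and $\tP_{\rm out}^{\rm A} = 1$, which is precisely the ``else'' branch of (\ref{Eqn:PoutA}).

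The central simplification is a high-SNR treatment of the amplified gain. Writing $|\tilde h_1|^2 = \frac{P_R |h_{r,1}|^2 |h_r|^2}{P_T |h_r|^2 + N_0}$ and dropping $N_0$ against $P_T|h_r|^2$ yields $|\tilde h_1|^2 \approx \lambda |h_{r,1}|^2$ with $\lambda = P_R/P_T$, while the cross term $|\tilde h_1|^2\rho^2|h_{r,1}|^2$ in the denominators of $\gamma_1$ and $\gamma_{12}$ becomes negligible against $Z := |h_1|^2 + |\tilde h_1|^2$. Two consequences follow. First, both UE1 conditions collapse to the single threshold $Z \ge \bar\gamma_1$, where $\bar\gamma_1$ is the larger of $f(R_1)N_0/P_1$ and $f(R_2)N_0/(P_2 - f(R_2)P_1)$. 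Second, $Z$ no longer depends on $|h_r|^2$, so the UE1 event becomes independent of the UE2 event $\gamma_2 \ge f(R_2)$, which involves only $|h_r|^2$ and $|h_{r,2}|^2$. The joint probability therefore factorizes into the product seen in (\ref{Eqn:PoutA}).

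For the UE1 factor I would compute $\tP\{Z \ge \bar\gamma_1\}$ with $Z \approx |h_1|^2 + \lambda|h_{r,1}|^2$, a sum of two independent exponentials with means $\sigma_1^2$ and $\lambda\sigma_{r,1}^2$, i.e. a hypoexponential law. When the means differ, its survival function is the weighted two-exponential expression with weight $\eta = \sigma_1^2/(\sigma_1^2 - \lambda\sigma_{r,1}^2)$; when they coincide ($\sigma_1^2 = \lambda\sigma_{r,1}^2$) the sum is Erlang-$2$ and the survival function reduces to $(1 + \bar\gamma_1/\sigma_1^2)e^{-\bar\gamma_1/\sigma_1^2}$. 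These are exactly the two bracketed UE1 terms. For the UE2 factor I would substitute $\rho^{-2} = (P_T|h_r|^2 + N_0)/P_R$ into $\gamma_2 \ge f(R_2)$, discard the $O(N_0^2/P_R)$ contribution, and rearrange into the region $|h_{r,2}|^2 \ge \frac{\theta_r |h_r|^2/\lambda}{|h_r|^2 - \theta_r}$ for $|h_r|^2 > \theta_r$, with $\theta_r = f(R_2)N_0/(P_2 - f(R_2)P_1)$. Integrating the product of the two exponential densities over this region, the shift $t = |h_r|^2 - \theta_r$ produces the prefactor $e^{-\delta}$ together with an integral $\int_0^\infty \exp(-t/\sigma_r^2 - \theta_r^2/(\lambda\sigma_{r,2}^2 t))\,dt$, which equals $\sigma_r^2 \mu K_1(\mu)$ by the identity $\int_0^\infty e^{-at - b/t}dt = 2\sqrt{b/a}\,K_1(2\sqrt{ab})$, delivering the factor $\mu e^{-\delta}K_1(\mu)$.

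The main obstacle is justifying the decoupling approximation: the three SINRs genuinely share $|h_r|^2$ through the AF gain $\rho$, and the exact joint distribution admits no closed form, so the validity of (\ref{Eqn:PoutA}) rests on the high-SNR claim that $|\tilde h_1|^2$ depends on $|h_r|^2$ only weakly. I would therefore confirm numerically, as the letter does, that the product form tracks the exact outage probability across the SNR range, and I would keep account of the neglected terms to ensure they do not alter the diversity order or coding gain extracted in the subsequent asymptotic analysis.
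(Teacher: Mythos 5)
Your proposal is correct and follows essentially the same route as the paper: approximate $\rho^2 \approx \lambda/|h_r|^2$ so that the UE1 event reduces to $|h_1|^2 + \lambda|h_{r,1}|^2 \ge \bar\gamma_1$ and becomes independent of the UE2 event, evaluate the former via the hypoexponential/Erlang-2 survival function, and evaluate the latter via $\int_0^\infty e^{-at-b/t}\,dt = 2\sqrt{b/a}\,K_1(2\sqrt{ab})$ (the paper cites Gradshteyn--Ryzhik (3.324-1)), combining the two by independence. The only minor difference is the treatment of $\gamma_{12}$: the paper discards that condition outright by arguing it is dominated by the $\gamma_2$ condition, so its $\bar\gamma_1$ comes from $f(R_1)$ alone, whereas you fold it into the UE1 threshold as a maximum --- the resulting closed form is structurally identical.
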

\begin{proof}\label{Proof_Thm_PoutA}
Note that $ \gamma _{12}$ is superior to $ \gamma _2 $ when $
\left| {h_1 } \right|^2  + \left| {\tilde h_1 } \right|^2  > \rho ^2 \left| {h_{r,2} } \right|^2 \left| {h_r } \right|^2
$. This condition is always satisfied in practical systems, hence we neglect $\gamma _{12}$ 
and further obtain a lower bound of ${\tt P}_{\rm out}$ as
\begin{small}\begin{align}\label{Eqn0:PoutL}
{\tt P}_{\rm out} \ge {\tt P}_{\rm out} ^{\rm L} = {\tt P}\left\{ {\gamma _1^{\rm L}  < f\left( {R_1 } \right){\rm{~ or~ }}\gamma _2  < f\left( {R_2 } \right)} \right\},
\end{align}\end{small}%
with $ \gamma _1^{\rm L}    =\frac
{{\left( {\left| {h_1 } \right|^2  + \left| {\tilde h_1 } \right|^2 } \right)P_1 }}{{N_0 }}$.
The expression of (\ref{Eqn0:PoutL}) is much simpler when  it is compared to (\ref{Def:P_out}),
but further simplification of (\ref{Eqn0:PoutL}) is still needed.
Hence, we use an approximation of $\rho^2$ for the medium-high SNR and obtain a closed-form expression of  $ {\tt P}_{\rm out}^{\rm A} $ (Although $ {\tt P}_{\rm out}^{\rm A} $ is an approximation for ${\tt P}_{\rm out} ^{\rm L}$ and thus an approximation for ${\tt P}_{\rm out} $, it is sufficiently accurate for the exact results as shown as in Section~\ref{Numerical_Results}) is given by (\ref{Eqn:PoutA})
 (on the top of next page) and the proof is shown in Appendix \ref{Appendix:PoutA}.
\end{proof}
\begin{remark}\label{Remark1:PoutA}
Proposition~\ref{Thm:Approx_PoutL} shows that the outage  of system definitely occurs (${\tt P}_{\rm out}^{\rm A} =1$) when $ {{P_2 } \mathord{\left/
 {\vphantom {{P_2 } {P_1 }}} \right.
 \kern-\nulldelimiterspace} {P_1 }} \le f\left( {R_2 } \right)$.
The reason for the outage is that when $ {{P_2 } \mathord{\left/
 {\vphantom {{P_2 } {P_1 }}} \right.
 \kern-\nulldelimiterspace} {P_1 }} \le f\left( {R_2 } \right)$,
 $x_2$ is dominated by  the interference from $x_1$ and hence cannot be decoded successfully by UE2 or UE1.
Therefore Proposition~\ref{Thm:Approx_PoutL} reveals that 
 the minimum transmit power that should be assigned for $x_2$ by BS is $
{{P_T  \cdot f\left( {R_2 } \right)} \mathord{\left/
 {\vphantom {{P_T  \cdot f\left( {R_2 } \right)} {\left( {1 + f\left( {R_2 } \right)} \right)}}} \right.
 \kern-\nulldelimiterspace} {\left( {1 + f\left( {R_2 } \right)} \right)}}$.
\end{remark}
\begin{remark}\label{Remark2:PoutA}
Note that $ {\tt P}_{\rm out}^{\rm A} $ is represented by ${\tt P}_{\rm out}^{\rm A}= 1-{\tt P}_{\rm A}\cdot {\tt P}_{\rm B}$, where
\begin{small}
\begin{align}
{\tt P}_{\rm A}
 &= \left\{ {\begin{aligned}
   &{{\left( {1 + \frac{{\bar \gamma _1 }}{{\sigma _1^2 }}} \right)\exp \left( { - \frac{{\bar \gamma _1 }}{{\sigma _1^2 }}} \right)},\sigma _1^2  = \lambda \sigma _{r,1}^2 }  \\
   &{\eta \exp \left( { - \frac{{\bar \gamma _1 }}{{\sigma _1^2 }}} \right) + \left( {1 - \eta } \right)\exp \left( {  \frac{{\bar \gamma _1 }}{{\lambda \sigma _{r,1}^2 }}} \right),\sigma _1^2  \ne \lambda \sigma _{r,1}^2 }  \\
\end{aligned}} \right.,\label{Eqn:P_A}\\
{\tt P}_{\rm B}& = \mu \exp \left( { - \delta } \right)K_1 \left( \mu  \right),\label{Eqn:P_B}
\end{align}
\end{small}in which
$K_1 \left( \mu  \right)$ denotes the first order modified Bessel function of the second kind.
In physics, ${\tt P}_{\rm A}$ and ${\tt P}_{\rm B}$ represent the probabilities of $x_1$ being decoded by UE1, and  $x_2$ being decoded by UE2, respectively.
Therefore, we may use $1-{\tt P}_{\rm A}$ and $1-{\tt P}_{\rm B}$ to estimate the outage probabilities for UE1 and UE2, respectively.
\end{remark}
\subsection{Asymptotic Analysis}\label{Asymptotic_Behavior}
 In this subsection, we focus on the high SNR regime and discuss the asymptotic characteristics of the outage probability based on (\ref{Eqn:PoutA}). To analyze ${\tt P}_{\rm out}^{\rm A}$ in the  high SNR regime, we fix
${{P_R } \mathord{\left/
 {\vphantom {{P_R } {P_T }}} \right.
 \kern-\nulldelimiterspace} {P_T }}$ and
 ${{P_1 } \mathord{\left/
 {\vphantom {{P_1 } {P_T }}} \right.
 \kern-\nulldelimiterspace} {P_T }}$ when
 ${{P_T } \mathord{\left/
 {\vphantom {{P_T } {N_0 }}} \right.
 \kern-\nulldelimiterspace} {N_0 }} \to \infty$
  (with denoting  $\bar \gamma _0  = {{P_T } \mathord{\left/
 {\vphantom {{P_T } {N_0 }}} \right.
 \kern-\nulldelimiterspace} {N_0 }}$).
 Then, we obtain the following proposition.

\begin{proposition}\label{Thm:High_SNR}
When $\bar \gamma _0 \to \infty$ and  ${{P_2 } \mathord{\left/
 {\vphantom {{{}P_2 } {P_1 }}} \right.
 \kern-\nulldelimiterspace} {P_1 }} > f\left( {R_2 } \right)$, the asymptotic expression of the outage probability  is given by
\begin{small}\begin{align}\label{Eqn:High_SNR}
\mathop {\lim }\limits_{\bar \gamma _0  \to \infty } {\tt P}_{\rm out}^{\rm A} \approx
\frac{{\delta _r }}{{\bar \gamma _0 }},
\end{align}\end{small}where
$\delta _r = \frac{{f\left( {R_2 } \right)}}{{1 - \lambda _1 \left[ {1 + f\left( {R_2 } \right)} \right]}}\left( {\frac{1}{{\sigma _r^2 }} + \frac{1}{{\lambda \sigma _{r,2}^2 }}} \right)$ and $ {{\lambda _1  = P_1 } \mathord{\left/
 {\vphantom {{\lambda _1  = P_1 } {P_T }}} \right.
 \kern-\nulldelimiterspace} {P_T }}$.
\end{proposition}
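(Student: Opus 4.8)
The plan is to start from the factored form ${\tt P}_{\rm out}^{\rm A} = 1 - {\tt P}_{\rm A}\,{\tt P}_{\rm B}$ exhibited in Remark~\ref{Remark2:PoutA} and to perform a small-argument expansion of each factor. The key observation is that in the regime $\bar\gamma_0\to\infty$ (with $\lambda$ and $\lambda_1$ held fixed) both threshold quantities vanish at rate $1/\bar\gamma_0$: the decoding threshold for $x_1$ scales as $\bar\gamma_1 = f(R_1)/(\lambda_1\bar\gamma_0)$, while the UE2 threshold obtained in the proof of Proposition~\ref{Thm:Approx_PoutL} scales as $\theta_r = f(R_2)/\{\bar\gamma_0[1-\lambda_1(1+f(R_2))]\}$, whose denominator is positive precisely because $P_2/P_1 > f(R_2)$. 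I would therefore Taylor-expand ${\tt P}_{\rm A}$ and ${\tt P}_{\rm B}$ in these small parameters and retain only the leading $O(1/\bar\gamma_0)$ contribution.

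First I would handle ${\tt P}_{\rm A}$ using (\ref{Eqn:P_A}) and show that its deviation from unity is second order. In the branch $\sigma_1^2=\lambda\sigma_{r,1}^2$, writing $a=\bar\gamma_1/\sigma_1^2\to 0$ gives $(1+a)e^{-a}=1-\tfrac{a^2}{2}+O(a^3)$, so $1-{\tt P}_{\rm A}=O(\bar\gamma_1^2)$. In the branch $\sigma_1^2\ne\lambda\sigma_{r,1}^2$, expanding both exponentials to first order and using $\eta=\sigma_1^2/(\sigma_1^2-\lambda\sigma_{r,1}^2)$ shows that the linear terms combine as $\eta\,\bar\gamma_1/\sigma_1^2+(1-\eta)\,\bar\gamma_1/(\lambda\sigma_{r,1}^2)=0$, so again $1-{\tt P}_{\rm A}=O(\bar\gamma_1^2)=O(1/\bar\gamma_0^2)$. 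This exact first-order cancellation is the step I would verify most carefully, since it is what renders the UE1 (direct-link) contribution negligible relative to the UE2 contribution at high SNR.

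Next I would expand ${\tt P}_{\rm B}=\mu e^{-\delta}K_1(\mu)$. Since $\mu=2\theta_r/\sqrt{\lambda\sigma_r^2\sigma_{r,2}^2}\to 0$, I would invoke the small-argument form of the first-order modified Bessel function, $K_1(\mu)=\mu^{-1}+\tfrac{\mu}{2}\ln(\mu/2)+O(\mu)$, which yields $\mu K_1(\mu)=1+O(\mu^2\ln\mu)$. Combining this with $e^{-\delta}=1-\delta+O(\delta^2)$ and noting $\mu^2\sim\theta_r^2$ while $\delta\sim\theta_r$, the product gives $1-{\tt P}_{\rm B}=\delta+O(\theta_r^2\ln\theta_r)$, so $\delta$ is the leading term. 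Correctly identifying that the $\mu^2\ln\mu$ correction from the Bessel function is subdominant to the linear term $\delta$ is the main analytical subtlety here.

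Finally I would recombine the two factors via ${\tt P}_{\rm out}^{\rm A}=(1-{\tt P}_{\rm A})+(1-{\tt P}_{\rm B})-(1-{\tt P}_{\rm A})(1-{\tt P}_{\rm B})$. Because $1-{\tt P}_{\rm A}=O(1/\bar\gamma_0^2)$ and the cross term is of even higher order, only $1-{\tt P}_{\rm B}\approx\delta$ survives to first order, giving ${\tt P}_{\rm out}^{\rm A}\approx\delta=\theta_r(1/\sigma_r^2+1/(\lambda\sigma_{r,2}^2))$. Substituting the high-SNR form of $\theta_r$ and using $P_2=P_T-P_1$ so that $P_2-f(R_2)P_1=P_T[1-\lambda_1(1+f(R_2))]$ converts $\theta_r$ into $f(R_2)/\{\bar\gamma_0[1-\lambda_1(1+f(R_2))]\}$, which reproduces $\delta_r/\bar\gamma_0$ exactly and completes the proof.
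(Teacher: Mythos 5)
Your proof is correct and follows essentially the same route as the paper's: the paper likewise replaces $K_1(\mu)$ by $1/\mu$ and Taylor-expands the exponentials to leading order, but omits the ``tedious details.'' Your explicit verification of the first-order cancellation in ${\tt P}_{\rm A}$ (which is what makes the UE1 contribution $O(1/\bar\gamma_0^2)$) and your identification of $\theta_r = f(R_2)/\{\bar\gamma_0[1-\lambda_1(1+f(R_2))]\}$ supply precisely the steps the paper leaves out.
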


\begin{proof}
{{The asymptotic characteristic of ${K_1  \left( x \right)}$ in \cite[(9.7.2)]{Handbook_functions} shows that $K_1 \left( x \right) \to {1}/{x}$ when $x$ is sufficiently small.
Therefore we recast ${\tt P}_{\rm out}^{\rm A} $ in (\ref{Aeqn:PoutA}) by substituting ${K_1  \left( x \right)}$ with ${1}/{x}$.
Then, by using Taylor expansion for ${\exp \left(  {x}\right)}$ at $0$ and omitting the high order terms of $x$ when $x \to 0$, we finally obtain the approximation of ${\tt P}_{\rm out}^{\rm A} $ as (\ref{Eqn:High_SNR})
(the tedious details of the derivation are  omitted due to the limitation of space.).}}
\end{proof}

\begin{remark}\label{Remark1:Asymp}
With the definition of diversity order, $d =   - \mathop {\lim }\limits_{\bar \gamma _0  \to \infty } \frac{{\log \left( {\tt P}_{\rm out}^{\rm A}\right)}}{{\log \bar \gamma _0 }}$,
 we show that the diversity order of this cooperative NOMA system is one.
\end{remark}

\begin{remark}\label{Remark2:Asymp}
{{Based on Proposition~\ref{Thm:High_SNR}, we note that the value of $\delta _r $  is determined by $\sigma _r^2 $ and $ \lambda \sigma _{r,2}^2$
with any given $\lambda_1$ and $R_2$.
Specifically, if we  fix $\sigma _r^2 \cdot \sigma _{r,2}^2$, then the outage performance is better when $\sigma _r^2 < \sigma _{r,2}^2$ than that when $\sigma _r^2 \ge \sigma _{r,2}^2$ because $\lambda<1$ usually holds  in practical systems. The physical meaning of that is if the sum distance  of the relay to UE2 and the relay to the BS is fixed,  the outage performance when the relay is close to  UE2 is superior to that when the relay is
close to the BS. This is because the power of relay can be  full utilized to reduce the outage probability of system when the relay is close the user.}}
\end{remark}

\section{Numerical Results}\label{Numerical_Results}

\begin{figure}[htbp]
\centering
\includegraphics[scale=0.65]{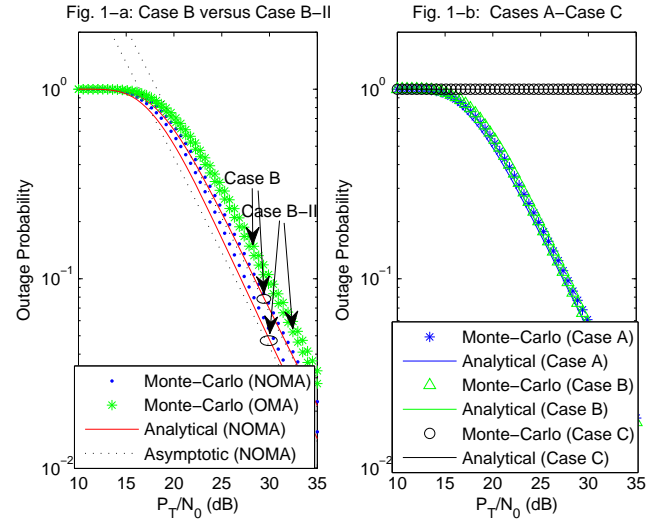}
\caption{{{The outage probability versus SNR.}}}
\label{fig:result_2}
\end{figure}

\begin{figure*}[ht]
\begin{small}
\begin{align}\label{Aeqn0:Part1}
{\tt P}\left\{ {\left| {h_1 } \right|^2  + \lambda \left| {h_{r,1} } \right|^2 < \bar \gamma _1 } \right\} = \left\{ {\begin{aligned}
   &{1 -{\left( {1 + \frac{{\bar \gamma _1 }}{{\sigma _1^2 }}} \right)\exp \left( { - \frac{{\bar \gamma _1 }}{{\sigma _1^2 }}} \right)},\sigma _1^2  = \lambda \sigma _{r,1}^2 }  \\
   &{1 - \eta \exp \left( { - \frac{{\bar \gamma _1 }}{{\sigma _1^2 }}} \right) - \left( {1 - \eta } \right)\exp \left( { - \frac{{\bar \gamma _1 }}{{\lambda \sigma _{r,1}^2 }}} \right),\sigma _1^2  \ne \lambda \sigma _{r,1}^2 }  \\
\end{aligned}} \right.
\end{align}
\end{small}
\hrulefill
\end{figure*}
\begin{figure*}[ht]
\begin{small}
\begin{align}\label{Aeqn0:Part2}
& {\tt P}\left\{ {\lambda \left| {h_{r,2} } \right|^2 \left( {1 - \frac{{\theta _r }}{{\left| {h_r } \right|^2 }}} \right) \le \theta _r } \right\}
=1 - \exp \left( { - \frac{{\theta _r }}{{\sigma _r^2 }} - \frac{{\theta _r }}{{\lambda \sigma _{r,2}^2 }}} \right)\mu K_1 \left( \mu  \right)
\end{align}
\end{small}
\hrulefill
\end{figure*}
\begin{figure*}
\begin{small}
\begin{align}\label{Aeqn:PoutA}
&{\tt P}\left\{ {\left| {h_1 } \right|^2  + \lambda \left| {h_{r,1} } \right|^2  < \bar \gamma _1  {\mbox{ or }}{\lambda \left| {h_{r,2} } \right|^2 \left( {1 - \frac{{\theta _r }}{{\left| {h_r } \right|^2 }}} \right) \le \theta _r }} \right\}
 &= \left\{ {\begin{aligned}
   &{1 - \mu \exp \left( { - \delta } \right)K_1 \left( \mu  \right){\left( {1 + \frac{{\bar \gamma _1 }}{{\sigma _1^2 }}} \right)\exp \left( { - \frac{{\bar \gamma _1 }}{{\sigma _1^2 }}} \right)},\sigma _1^2  = \lambda \sigma _{r,1}^2 }  \\
 & { 1 - \mu \exp \left( { - \delta } \right)K_1 \left( \mu  \right)\left[ {\eta \exp \left( { - \frac{{\bar \gamma _1 }}{{\sigma _1^2 }}} \right) + \left( {1 - \eta } \right)\exp \left( { - \frac{{\bar \gamma _1 }}{{\lambda \sigma _{r,1}^2 }}} \right)} \right],\mbox{else}}
\end{aligned}} \right.
\end{align}
\end{small}
\hrulefill
\end{figure*}

In this section, the outage probability of this cooperative NOMA system is evaluated based on Monte-Carlo simulations averaging over $10^5$ independent channel realizations {{ whilst the noise power are set as $N_0=1$ and $P_T$ varies from $10$~dB to $35$~dB.
We define the three cases of (\ref{Eqn:PoutA})  as following: Case A when  $\sigma _1^2  = \lambda \sigma _{r,1}^2$ and $P_1/P_2> f\left( {R_2 } \right) $;
Case B when  $\sigma _1^2  \ne \lambda \sigma _{r,1}^2$ and $P_1/P_2> f\left( {R_2 } \right) $; Case C when  $P_1/P_2\le f\left( {R_2 } \right) $.
Without loss of generality, we compare outage performance for Case B of cooperative NOMA scheme with that for cooperative OMA scheme.
Since cooperative NOMA scheme includes two equal-length time slots whilst cooperative OMA scheme includes three equal-length time slots
 during the whole transmission, we set the data requirements for cooperative OMA as $1.5$-fold of that for cooperative NOMA for a fair comparison.  Except for that, the  parameters for cooperative OMA are identical with that for cooperative NOMA.
  The data rate requirements for cooperative NOMA are set as $R_1 =1$ and $ R_2=0.7$,
and the distances of BS-UE1 and R-UE1 are set as $d_1=30$~m and  $d_{r,1}=30$~m, respectively. Besides, the variances of the channels are calculated by $\sigma _i^2  = \left( {{{d_i } \mathord{\left/ {\vphantom {{d_i } {d_0 }}} \right. \kern-\nulldelimiterspace} {d_0 }}} \right)^{ - \alpha }$ with $d_0 =20$~m and $\alpha =2$.
For comparing the outage performance for different relay locations, we consider the following two scenarios of Case B:
The distances of BS-R and R-UE2 are set as $d_r=30$~m and $d_{r,2}=45$~m,  respectively for Case B-I, and they are set as $d_r=45$~m, $d_{r,2}=30$~m, respectively for Case B-II.
 Moreover, the factors of transmit power ratios are set as $\lambda = 0.3$ and $ \lambda_1 = 0.2$ for NOMA,  while $ \lambda_1 $ for OMA is optimized through the brute-force search to obtain the minimal outage probability.

In Fig.~\ref{fig:result_2}-a, the Monte-Carlo results of outage probabilities  for cooperative NOMA are compared with the approximate results, and also with the simulation results  for cooperative OMA.
We observe from Fig.~\ref{fig:result_2}-a that the proposed approximation shows an excellent agreement with the exact outage
probability for cooperative NOMA, especially for the high SNR regime.
Our results also illustrate that the outage performance for cooperative NOMA outperforms that for cooperative OMA for all SNR regimes.
 It is shown that both cooperative NOMA and cooperative OMA achieve the same diversity order for high SNR regime, which is predicted to be $1$ in Proposition \ref{Thm:High_SNR}. Meanwhile, the coding gain for cooperative NOMA is superior to that for cooperative OMA, which shows the advantage of outage performance for cooperative NOMA.
Besides, the impact of relay location on coding gain is also examined by comparing the simulation results of Cases B-I and B-II.
 Fig.~\ref{fig:result_2}-a shows that the coding gain is improved when R is closer to  UE2. 
In Fig.~\ref{fig:result_2}-b, the outage probability for Cases A, B, and C of cooperative NOMA are shown.
The parameters  for Cases A and C are identical to that of Case B-I with an exception of $d_{r,1}\approx16.43$~m for Case A, and $\lambda_1 =0.4$ for Case C.
  The analytical results show good match  with numerical results for each case of (\ref{Eqn:PoutA}) in Fig.~\ref{fig:result_2}-b.}}

\section{Conclusion}
This letter studied the outage performance for a downlink cooperative NOMA scenario with the help of an AF relay. The approximation for outage probability of the system was derived in a closed-form expression and {{the accuracy of the approximation is verified by various numerical simulations.  Furthermore, the asymptotic behaviors for the considered system were investigated  for the high SNR regime, which indicates that the cooperative NOMA is obviously superior to cooperative OMA in coding gain without losing diversity order. Simulation results also examine the impact of  relay location variations on coding gain of the system.}}

\appendices
\section{Proof of Proposition \ref{Thm:Approx_PoutL}}\label{Appendix:PoutA}
With the approximation of $ \rho ^2   \approx \frac{{P_R }}{{P_T \left| {h_r } \right|^2 }}$ for medium-high SNR,  
(\ref{Eqn0:PoutL}) is represented as
\begin{small}
\begin{align}\label{Aeqn0:PoutA}
 &{\tt P}_{\rm out} ^{\rm L} \approx{\tt P}\left\{ {\left| {h_1 } \right|^2  + \lambda \left| {h_{r,1} } \right|^2  < \bar \gamma _1  {\mbox{ or }}{\lambda \left| {h_{r,2} } \right|^2 \left( {1 - \frac{{\theta _r }}{{\left| {h_r } \right|^2 }}} \right) \le \theta _r }} \right\} \nonumber\\
&= {\tt P}\left\{ {\left| {h_1 } \right|^2  + \lambda \left| {h_{r,1} } \right|^2  < \bar \gamma _1 } \right\} 
  + {\tt P}\left\{ {\lambda \left| {h_{r,2} } \right|^2 \left( {1 - \frac{{\theta _r }}{{\left| {h_r } \right|^2 }}} \right) \le \theta _r } \right\} \nonumber\\
 &-  {\tt P}\left\{ {\left| {h_1 } \right|^2  + \lambda \left| {h_{r,1} } \right|^2  < \bar \gamma _1 } \right\} {\tt P}\left\{ {\lambda \left| {h_{r,2} } \right|^2 \left( {1 - \frac{{\theta _r }}{{\left| {h_r } \right|^2 }}} \right) \le \theta _r } \right\}
 \end{align}
 \end{small}and the main terms in (\ref{Aeqn0:PoutA}) are calculated as follows.

The first term in (\ref{Aeqn0:PoutA}) is given by (\ref{Aeqn0:Part1}) (on the top of the page)
with $\eta  = \frac{{\sigma _1^2 }}{{\sigma _1^2  - \lambda \sigma _{r,1}^2 }} $.
In the sequel, by utilizing \cite[(3.324-1)]{Table_Integral}, the second term in (\ref{Aeqn0:PoutA}) is given by (\ref{Aeqn0:Part2}) (on the top of the page)
with $ \lambda  = {{P_R } \mathord{\left/
 {\vphantom {{P_R } {P_T }}} \right.
 \kern-\nulldelimiterspace} {P_T }}$ and $ \mu  = \frac{{2\theta _r }}{{\sqrt {\lambda \sigma _r^2 \sigma _{r,2}^2 } }}
 $. Finally, by substituting (\ref{Aeqn0:Part1}) and (\ref{Aeqn0:Part2}) into (\ref{Aeqn0:PoutA}), we obtain (\ref{Aeqn:PoutA}) (on the top of the page)
   with $\delta  = \frac{{\theta _r }}{{\sigma _r^2 }} + \frac{{\theta _r }}{{\lambda \sigma _{r,2}^2 }}$, which completes the proof.

%

%
%

\ifCLASSOPTIONcaptionsoff
  \newpage
\fi





\end{document}